\newtheorem{theorem}{Theorem}
\begin{document}

\title{\bf Operational protocols cannot certify classicality}

\author{{Chris Fields$^1$, James F. Glazebrook$^2$, Antonino Marcian\`{o}$^3$ and Emanuele Zappala$^4$}\\ \\
{\it$^1$ Allen Discovery Center}\\
{\it Tufts University, Medford, MA 02155 USA}\\
{fieldsres@gmail.com} \\
{ORCID: 0000-0002-4812-0744} \\ \\
{\it$^2$ Department of Mathematics and Computer Science,} \\
{\it Eastern Illinois University, Charleston, IL 61920 USA} \\
{\it and} \\
{\it Adjunct Faculty, Department of Mathematics,}\\
{\it University of Illinois at Urbana-Champaign, Urbana, IL 61801 USA}\\
{jfglazebrook@eiu.edu}\\
{ORCID: 0000-0001-8335-221X}\\ \\
{\it$^3$ Center for Field Theory and Particle Physics \& Department of Physics} \\
{\it Fudan University, Shanghai, CHINA} \\
{\it and} \\
{\it Laboratori Nazionali di Frascati INFN, Frascati (Rome), ITALY} \\
{marciano@fudan.edu.cn} \\
{ORCID: 0000-0003-4719-110X} \\ \\
{\it$^4$ Department of Mathematics and Statistics,} \\
{\it Idaho State University, Pocatello, ID 83209, USA} \\
{emanuelezappala@isu.edu} \\
{ORCID: 0000-0002-9684-9441} \\ \\
}
\maketitle

\begin{abstract}
\noindent
The existence and practical utility of operational protocols that certify entanglement raises the question of whether operational protocols exist that certify the {\em absence} of entanglement, i.e. that certify separability.  We show, within a purely topological, interpretation-independent representation, that such protocols do not exist.  Classicality is therefore, as Bohr suggested, purely a pragmatic notion. 
\end{abstract}

\textbf{Keywords:} Complexity, Computation, Decoherence, Device independence, LOCC, Separability

\section{Introduction}

Quantum communication security has motivated the development of robust operational, i.e. device-independent, protocols for certifying and quantifying shared entanglement; see e.g. \cite{friis:19} for a general review and \cite{vidick:19} for the particular case of quantum-key distribution.  Such protocols involve independent observers (Alice and Bob) who each perform finite local preparation and measurement operations on shared quantum states, and who are also allowed to communicate classically before preparations and after measurements; hence they are described as (finite) {\em local operations, classical communication} (LOCC) protocols \cite{chit:14}, with a Clauser-Horne-Shimony-Holt (CHSH) test \cite{chsh:69} as a canonical example.\footnote{We will use `LOCC' to refer exclusively to finite-round LOCC protocols.}

The existence of such protocols naturally raises two questions:

\begin{enumerate}
\item Do LOCC protocols exist that certify the {\em absence} of shared entanglement, i.e. that certify a purely-classical interaction?
\item Can the parties in a LOCC protocol demonstrate by local operations that the protocol is LOCC?
\end{enumerate}

Here we show that the answers to both of these questions are `No'.  Classicality, whether in the form of state separability, decoherence, or classical communication, cannot be certified by device-independent protocols.  Hence while various forms of classicality are often assumed as information-processing resources, e.g. in defining LOCC itself, their availability as resources cannot be operationally demonstrated.

The current result generalizes previous results for particular cases.  Brand\~{a}o, Christandl and Yard \cite{bcy:11} showed that a classical verifier (Arthur) of proofs generated by $k \geq 2$ resource-unlimited (Merlin) provers could not determine that the $k$ proofs were unentangled if restricted to LOCC, i.e. that {\bf QMA}$(k) =$ {\bf QMA} under LOCC, where {\bf QMA} is the {\it Quantum Merlin-Arthur} complexity class consisting of all languages for which, given a string belonging to the language, there is a polynomial-size quantum prover that can certify the string's belonging to the class with arbitrarily high precision.  We showed in \cite{fgmz:25b} that a referee who communicates classically with multiple interactive provers that share an entangled state (a {\bf MIP*} machine) cannot demonstrate independence of the provers, so cannot distinguish between  {\bf MIP*} and monolithic quantum computation, e.g. by a quantum Turing Machine (QTM).  Local operations by Alice and Bob require that their respective operators $Q_A$ and $Q_B$ are separable, i.e. that the joint operator $Q$ factorizes as $Q = Q_A \otimes Q_B$.  Cohen \cite{cohen:15} demonstrated probabilistic state-discrimination problems solvable by separable operators unrestricted by locality but not solvable when restricted to LOCC.  It is known, moreover, that even without LOCC or other restrictions, the problem of identifying separable states in an arbitrary finite-dimensional Hilbert space is {\bf NP}-hard \cite{chit:19}.

Interpretations of quantum theory differ on whether classicality is considered objective --- see, e.g. \cite{landsman:06, schloss:15} for discussions of Bohr's position on the matter --- and modifications of quantum theory that introduce physical ``collapse'' mechanisms \cite{bassi:13} do so to guarantee objective classicality 
--- in the language of Bohr, the ability to communicate experimental setups and results in classical terms, not necessarily as the existence of observer-independent properties of quantum systems.  The methods employed here make no interpretation-specific assumptions, and the results described are interpretation independent.  In particular, they show that classicality cannot be operationally certified even if it is ``objective'' from a global, non-operational perspective.

\section{Operational locality in LOCC protocols}

Before constructing a two-observer, or two-agent, LOCC protocol, let us consider a single finite observer $A$ interacting with a finite environment $\bar{A}$, i.e. we have $A\bar A= U$, the effectively isolated ``universe'' of interest.  The interaction is dictated by the operator   $H_{A \bar{A}} = H_U - (H_A + H_{\bar{A}})$, and can be represented as defined at a boundary $\mathscr{B}_A$ separating $A$ from $\bar{A}$.  {\em Separation} is here used in its technical sense of factorization; observers/agents in LOCC protocols are distinct systems with distinct states, capable of storing classical memories and employing classical communication; therefore the joint state $\rho_{A \bar{A}} = \rho_A \rho_{\bar{A}}$ at all times.  The boundary $\mathscr{B}_A$ functions as a holographic screen \cite{bousso:02} mediating information flow between $A$ and $\bar{A}$ and can be represented, without loss of generality, as an ancillary array of $N$ qubits, where $N = \mathrm{log_2(dim(}H_{A \bar{A}}))$, and hence as an ancillary qubit space $q^N$; see \cite{fgm:22b, fg:25} for details.\footnote{Note that the Dirac-von Neumann or ``textbook'' idea that measurement involves temporary observer-environment entanglement that terminates either at the observer's discretion or by some ``collapse'' mechanism is inconsistent with this separability requirement.  Preservation of the observer-environment distinction, i.e. separability, requires that observer and environment alternately prepare and measure states of an intervening holographic screen.}  The interaction $H_{A \bar{A}}$ between $A$ and $A$'s environment, i.e. $\bar{A}$, is by definition {\em local} to $\mathscr{B}_A$.  A LOCC observer/agent $A$ acts {\em at} her boundary, and in the formalism of \cite{fgm:22b, fg:25} acts {\em on} her boundary; under no circumstances does she act {\em beyond} her boundary, though the consequences of her actions may well propagate beyond her boundary.\footnote{Referring to $A$ as an ``agent'' implies nothing about consciousness or being ``human-like'' but does imply freedom from local determinism via the Conway-Kochen theorem \cite{conway-kochen:09}.  As discussed in depth in \cite{fg:25, ieee1}, the above characterization of locally-acting observers/agents is a generalization to generic quantum systems of classical {\em active inference agents}, time-persistent components of classical dynamical systems that comply with the {\em free energy principle} of \cite{ramstead:22,friston:23}.  The boundary $\mathscr{B}_A$ in the classical theory is a Markov blanket \cite{pearl:88}.}  

Following \cite{chit:14}, the action of $A$ on $\bar{A}$ can be represented as a {\it quantum instrument}, defined for a Hilbert space $\mathcal{H}$ as a family of completely positive maps $\mathcal E_j : B(\mathcal{H}) \longrightarrow B(\mathcal{H})$, where each $\mathcal E_j$ is a bounded linear map on the space of bounded linear operators $B(\mathcal{H})$ of $\mathcal{H}$. Additionally, it is required that the sum $\sum_j \mathcal E_j$ is trace preserving, and there are countably many (we will assume only finitely many) indices $j$ --- for a review of the subject including further properties, see e.g. \cite{gudder:23}.  Associated with the instrument $\{ E_j \}$ is a quantum-classical map, defined as a trace-preserving completely positive (TCP) map $B(\mathcal{H}) \longrightarrow B(\mathcal{H})\otimes B(\mathbb C)$ of the form $\rho \mapsto \sum_j \mathcal E_j(\rho) \otimes \ket{j} \bra{j}$.  We can see from the above that the space $\mathcal{H}$ on which these operators are defined is the Hilbert space of the boundary, $\mathcal{H}_{\mathscr{B}_A}$, the qubit space $q^N$.  The values $E_j(\rho)$ are defined with respect to some basis, which we can treat as fixed by some quantum reference frame (QRF) \cite{bartlett:07}.  These values are written to, or read from, a physically-implemented data structure that serves, for $A$, as a classical memory.  As shown in \cite{fgm:22a}, sequences of operations with any fixed QRF, including memory reads and writes, can be represented by a topological quantum field theory (TQFT) \cite{atiyah:88}.  This TQFT, which we will denote $Q_A$, is implemented by the bulk internal interaction $H_A$ of $A$.  Hence we can represent $A$ as in Diagram \eqref{alice}:

\begin{equation} \label{alice}
\begin{gathered}
\begin{tikzpicture}[every tqft/.append style={transform shape}]
\draw[rotate=90] (0,0) ellipse (2cm and 1 cm);
\node[above] at (0,2.1) {$\mathscr{B}_{A}$};
\node at (-1.7,0.3) {$Q_A$};
\node at (-1.9,1.7) {$A$};
\node at (1.7,1.7) {$\bar{A}$};
\node at (0.4,0.8) {$m_A$};
\node at (0.4,-0.7) {$y_A$};
\draw [thick] (-0.2,1.1) arc [radius=1, start angle=90, end angle= 270];
\draw [thick] (-0.2,0.5) arc [radius=0.4, start angle=90, end angle= 270];
\draw[rotate=90,fill=green,fill opacity=1] (0.8,0.2) ellipse (0.3 cm and 0.2 cm);
\draw[rotate=90,fill=green,fill opacity=1] (-0.6,0.2) ellipse (0.3 cm and 0.2 cm);
\draw [ultra thick, white] (-1,0.5) -- (-0.8,0.5);
\draw [ultra thick, white] (-1,0.3) -- (-0.8,0.3);
\draw [ultra thick, white] (-1.1,0.1) -- (-0.8,0.1);
\draw [ultra thick, white] (-1.1,-0.1) -- (-0.8,-0.1);
\draw [ultra thick, white] (-1,-0.3) -- (-0.8,-0.3);
\end{tikzpicture}
\end{gathered}
\end{equation}

Here the green ellipses labeled `$m_A$' and `$y_A$' represent the sectors of $\mathscr{B}_A$, each comprising some proper subset of qubits of $\mathscr{B}_A$, on which $Q_A$ acts.  

A two-agent LOCC protocol is an interaction between two systems of the type illustrated in Diagram \eqref{alice}, $A$ and $B$, communicating via distinct quantum and classical channels traversing their shared environment $E$ \cite{fgm:24a}:

\begin{equation} \label{locc-diag}
\begin{gathered}
\scalebox{0.8}{
\begin{tikzpicture}[every tqft/.append style={transform shape}]
\draw[rotate=90] (0,0) ellipse (2.8cm and 1 cm);
\node[above] at (0,1.9) {$\mathscr{B}$};
\draw [thick] (-0.2,1.9) arc [radius=1, start angle=90, end angle= 270];
\draw [thick] (-0.2,1.3) arc [radius=0.4, start angle=90, end angle= 270];
\draw[rotate=90,fill=green,fill opacity=1] (1.6,0.2) ellipse (0.3 cm and 0.2 cm);
\draw[rotate=90,fill=green,fill opacity=1] (0.2,0.2) ellipse (0.3 cm and 0.2 cm);
\draw [thick] (-0.2,-0.3) arc [radius=1, start angle=90, end angle= 270];
\draw [thick] (-0.2,-0.9) arc [radius=0.4, start angle=90, end angle= 270];
\draw[rotate=90,fill=green,fill opacity=1] (-0.6,0.2) ellipse (0.3 cm and 0.2 cm);
\draw[rotate=90,fill=green,fill opacity=1] (-2.0,0.2) ellipse (0.3 cm and 0.2 cm);
\draw [rotate=180, thick, dashed] (-0.2,0.9) arc [radius=0.7, start angle=90, end angle= 270];
\draw [rotate=180, thick, dashed] (-0.2,0.3) arc [radius=0.1, start angle=90, end angle= 270];
\draw [thick] (-0.2,0.5) -- (0,0.5);
\draw [thick] (-0.2,-0.1) -- (0,-0.1);
\draw [thick] (-0.2,-0.9) -- (0,-0.9);
\draw [thick] (-0.2,-0.3) -- (0,-0.3);
\draw [thick, dashed] (0,0.5) -- (0.2,0.5);
\draw [thick, dashed] (0,-0.1) -- (0.2,-0.1);
\draw [thick, dashed] (0,-0.9) -- (0.2,-0.9);
\draw [thick, dashed] (0,-0.3) -- (0.2,-0.3);
\node[above] at (-3,1.7) {$A$};
\node[above] at (-3,-1.7) {$B$};
\node[above] at (2.8,1.7) {$E$};
\draw [ultra thick, white] (-0.9,1.5) -- (-0.7,1.5);
\draw [ultra thick, white] (-1,1.3) -- (-0.8,1.3);
\draw [ultra thick, white] (-1,1.1) -- (-0.8,1.1);
\draw [ultra thick, white] (-1,0.9) -- (-0.8,0.9);
\draw [ultra thick, white] (-1.1,0.7) -- (-0.8,0.7);
\draw [ultra thick, white] (-1.1,0.5) -- (-0.8,0.5);
\draw [ultra thick, white] (-1,-0.9) -- (-0.8,-0.9);
\draw [ultra thick, white] (-1,-1.1) -- (-0.8,-1.1);
\draw [ultra thick, white] (-1,-1.3) -- (-0.8,-1.3);
\draw [ultra thick, white] (-0.9,-1.5) -- (-0.7,-1.5);
\draw [ultra thick, white] (-0.9,-1.7) -- (-0.7,-1.7);
\draw [ultra thick, white] (-0.8,-1.9) -- (-0.6,-1.9);
\draw [ultra thick, white] (-0.8,-2.1) -- (-0.6,-2.1);
\node[above] at (-1.3,1.4) {$Q_A$};
\node[above] at (-1.3,-2.4) {$Q_B$};
\draw [rotate=180, thick] (-0.2,2.3) arc [radius=2.1, start angle=90, end angle= 270];
\draw [rotate=180, thick] (-0.2,1.7) arc [radius=1.5, start angle=90, end angle= 270];
\draw [thick] (-0.2,1.9) -- (0,1.9);
\draw [thick] (-0.2,1.3) -- (0,1.3);
\draw [thick, dashed] (0.2,1.9) -- (0,1.9);
\draw [thick, dashed] (0.2,1.3) -- (0,1.3);
\draw [thick] (-0.2,-1.7) -- (0,-1.7);
\draw [thick] (-0.2,-2.3) -- (0,-2.3);
\draw [thick, dashed] (0.2,-1.7) -- (0,-1.7);
\draw [thick, dashed] (0.2,-2.3) -- (0,-2.3);
\draw [ultra thick, white] (0.3,2) -- (0.3,1.2);
\draw [ultra thick, white] (0.5,2) -- (0.5,1.2);
\draw [ultra thick, white] (0.7,1.9) -- (0.7,1.1);
\draw [ultra thick, white] (0.3,-2.4) -- (0.3,-1.5);
\draw [ultra thick, white] (0.5,-2.4) -- (0.5,-1.5);
\draw [ultra thick, white] (0.7,-1.8) -- (0.7,-1.5);
\node at (0.4,1.55) {$m_A$};
\node at (0.4,0.15) {$y_A$};
\node at (0.4,-0.65) {$y_B$};
\node at (0.4,-2) {$m_B$};
\node[above] at (4.5,-2.4) {Quantum channel};
\draw [thick, ->] (2.9,-2) -- (0.7,-0.8);
\node[above] at (4.5,-1.4) {Classical channel};
\draw [thick, ->] (2.9,-0.9) -- (2.3,-0.6);
\end{tikzpicture}
}
\end{gathered}
\end{equation}
Here $A$ and $B$ are restricted to interacting only via the designated channels traversing $E$ by requiring that the direct interaction $H_{AB} = 0$; in a practical setting, this is accomplished by imposing a spacelike separation.  Note both that $A$ and $B$ being mutually separable is required for the assumption of classical communication via a causal channel in $E$, and that this assumption renders $Q_A$ and $Q_B$ noncommutative, and hence subject to quantum contextuality \cite{fg:21,fg:23} --- see also \cite{adlam:21} for a comprehensive review including related concepts.  We showed in \cite{fgm:24a} that both canonical Bell/EPR experiments and multi-observer quantum Darwinism \cite{zurek:09} can be represented in the form of Diagram \eqref{locc-diag}.  One-way classical communication, e.g. between measurements of initial and final states in a scattering experiment, can be imposed is by ``unfolding'' Diagram \eqref{locc-diag} and imposing an external (i.e. measured by $E$) flow of time:

\begin{equation} \label{causal-locc}
\begin{gathered}
\begin{tikzpicture}[every tqft/.append style={transform shape}]
\draw[rotate=90] (0,0) ellipse (2cm and 1 cm);
\node[above] at (0,2.1) {$\mathscr{B}_{A}$};
\draw [thick] (-0.2,1.1) arc [radius=1, start angle=90, end angle= 270];
\draw [thick] (-0.2,0.5) arc [radius=0.4, start angle=90, end angle= 270];
\draw[rotate=90,fill=green,fill opacity=1] (0.8,0.2) ellipse (0.3 cm and 0.2 cm);
\draw[rotate=90,fill=green,fill opacity=1] (-0.6,0.2) ellipse (0.3 cm and 0.2 cm);
\draw [ultra thick, white] (-1,0.5) -- (-0.8,0.5);
\draw [ultra thick, white] (-1,0.3) -- (-0.8,0.3);
\draw [ultra thick, white] (-1.1,0.1) -- (-0.8,0.1);
\draw [ultra thick, white] (-1.1,-0.1) -- (-0.8,-0.1);
\draw [ultra thick, white] (-1,-0.3) -- (-0.8,-0.3);
\draw[rotate=90] (0,-4) ellipse (2cm and 1 cm);
\draw[rotate=90,fill=green,fill opacity=1] (0.8,-4) ellipse (0.3 cm and 0.2 cm);
\draw[rotate=90,fill=green,fill opacity=1] (-0.6,-4) ellipse (0.3 cm and 0.2 cm);
\draw [thick, dashed] (-0.1,0.5) -- (0.9,0.5);
\draw [thick, dashed] (-0.1,1.1) -- (0.8,1.1);
\draw [thick, dashed] (-0.1,-0.3) -- (0.9,-0.3);
\draw [thick, dashed] (-0.1,-0.9) -- (0.8,-0.9);
\draw [thick] (0.9,0.5) -- (4,0.5);
\draw [thick] (0.8,1.1) -- (4,1.1);
\draw [thick] (0.9,-0.3) -- (4,-0.3);
\draw [thick] (0.8,-0.9) -- (4,-0.9);
\draw [rotate=180, thick, dashed] (-4,0.9) arc [radius=1, start angle=90, end angle= 270];
\draw [rotate=180, thick, dashed] (-4,0.3) arc [radius=0.4, start angle=90, end angle= 270];
\node[above] at (4,2.1) {$\mathscr{B}_{B}$};
\node at (-1.7,0.3) {$Q_A$};
\node at (5.6,0.3) {$Q_B$};
\node at (-1.9,1.7) {$A$};
\node at (5.7,1.7) {$B$};
\node at (1.9,1.7) {$E$};
\node at (0.4,0.8) {$m_A$};
\node at (0.4,-0.7) {$y_A$};
\node at (3.4,0.8) {$m_B$};
\node at (3.4,-0.7) {$y_B$};
\draw [thick, ->] (-1.7,-2.6) -- (5.7,-2.6);
\node at (2,-3) {Externally-measured time};
\end{tikzpicture}
\end{gathered}
\end{equation}

Here it is also natural to think of $B$ as a ``future version'' of $A$, in which case the classical channel is $A$'s classical memory.

It is worth emphasizing that Diagrams \eqref{alice}, \eqref{locc-diag}, and \eqref{causal-locc} do not assume a background space, and hence define ``locality'' purely topologically.  State preparation and measurement actions performed by $A$ and $B$ using $Q_A$ and $Q_B$, respectively, are {\em local} to the boundaries $\mathscr{B}_{A}$ and $\mathscr{B}_{B}$, respectively, and indeed local to the sectors $(m_A, y_A)$ and $(m_B, y_B)$ of those boundaries on which the operations $Q_A$ and $Q_B$ are respectively defined.  The quantum and classical channels connecting these boundary sectors are implemented by $E$, and are strictly external, and therefore not local, to both $A$ and $B$.  This topological notion of locality follows from the assumption that $A$ and $B$ are distinct from both each other and from $E$ --- hence the separability condition $\rho_{ABE} = \rho_A \rho_B \rho_E$ is satisfied --- together with the Hilbert-space formalism, and requires no interpretation-dependent assumptions.  Topological locality therefore underlies {\em all} physically-realizable protocols involving preparation and measurement operations performed by systems/agents that remain distinct from both each other and their shared environment, and does so in an interpretation-independent way.

\section{Operational locality constrains measurements of entanglement entropy}

The fundamental purpose of LOCC protocols is to measure the entanglement entropy between the locally-accessible ``ends'' of the quantum channel, i.e. the entanglement entropy $\mathcal{S}(\rho_{q_A q_B})$ of the joint state $\rho_{q_A q_B}$ of the boundary sectors $q_A$ and $q_B$.  In protocols modeled on Bell/EPR experiments, this is generally accomplished by measuring the CHSH expectation value:

\begin{equation} \label{chsh-def}
\begin{split}
EXP = | & <<A_1,B_1>> + <<A_1,B_2>> + \\
& <<A_2,B_1>> - <<A_2,B_2>>|,
\end{split}
\end{equation}
\noindent
where $<<x,y>>$ denotes the expectation value for a collection of joint measurements of $x$ and $y$.  If $EXP > 2$, classical data reported by $A$ and $B$ violate the CHSH inequality, indicating entanglement between $q_A$ and $q_B$ \cite{chsh:69}; if  $q_A$ and $q_B$ each comprise single qubits, the upper limit is $EXP \leq 2 \surd 2$, the relevant Tsirelson bound \cite{cirelson:80}.  In protocols based on quantum Darwinism, violation of the CHSH inequality is assumed when $A$ and $B$ report observations of the {\em same} joint state $\rho_{q_A q_B}$ and hence observations of the same quantum channel, i.e. the same ``external system'' embedded in $E$ \cite{fgm:24a}.

Measurements of $\mathcal{S}(\rho_{q_A q_B})$ provide indirect measures of entanglement entropy between the quantum channel and the rest of $E$; the joint state $\rho_{q_A q_B}$ being measurably entangled indicates protection of the quantum channel from decohering interactions with the rest of $E$.  Monogamous entanglement of $\rho_{q_A q_B}$ indicates zero decoherence, and hence no interaction between the quantum channel and rest of $E$; as the latter condition is consistent with topological identification of $q_A$ with $q_B$, the Maldecena-Susskind ``ER = EPR'' conjecture \cite{maldecena:13} is an operational theorem for LOCC protocols \cite{fgmz:24}.

The operational locality enforced by LOCC disallows, however, measurements of any other entanglement entropies.  The relevant results can be stated as theorems:

\begin{theorem}[\cite{fg:23}, Thm. 3.1] \label{thm1}
No finite system $A$ can measure the entanglement entropy $\mathcal{S}(\bar{A})$ of its environment $\bar{A}$.
\end{theorem}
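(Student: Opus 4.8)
The plan is to reduce ``$A$ measures $\mathcal{S}(\bar{A})$'' to a statement about the screen $\mathscr{B}_A$ and then derive a contradiction. Operationally, a measurement of $\mathcal{S}(\bar{A})$ --- the von Neumann entropy of $\rho_{\bar{A}}$, i.e. the entanglement entropy across $\mathscr{B}_A$ --- is a finite composition of the completely positive maps $\mathcal{E}_j$ constituting the instrument $Q_A$, interleaved with reads from and writes to $A$'s classical memory, after which that memory encodes the value $\mathcal{S}(\bar{A})$ to any prescribed precision. First I would record the two features of the construction behind Diagram \eqref{alice} that the argument needs: (i) every $\mathcal{E}_j$ acts on $B(\mathcal{H}_{\mathscr{B}_A})$, and $Q_A$ is in fact supported on the sectors $(m_A,y_A)$, which are \emph{proper} subsets of the $N$ qubits of $\mathscr{B}_A$; and (ii) $\rho_{A\bar{A}}=\rho_A\rho_{\bar{A}}$ at all times, so $A$ and $\bar{A}$ are conditionally independent given $\mathscr{B}_A$ --- the screen is a Markov blanket. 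Together these say that the terminal content of $A$'s memory is a function \emph{only} of the stream of states written onto, and read from, the sectors of $\mathscr{B}_A$; it cannot depend on any feature of $\rho_{\bar{A}}$ not transmitted through those sectors.

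The core step is a ``twin environment'' construction. Given any candidate procedure, I would produce two environments $\bar{A}$ and $\bar{A}'$ --- differing, say, by tensoring $\bar{A}$ with an internally-entangled ancilla uncoupled from $\mathscr{B}_A$, or by replacing $H_{\bar{A}}$ with a different Hamiltonian compatible with the same $H_{A\bar{A}}=H_U-(H_A+H_{\bar{A}})$ at the screen --- for which $\mathcal{S}(\bar{A})\neq\mathcal{S}(\bar{A}')$ while the induced dynamics on $(m_A,y_A)$, hence the entire record accumulated by $A$, coincide. Since $A$'s output is identical in the two cases, $A$ has not measured $\mathcal{S}(\bar{A})$; as the procedure was arbitrary, no such procedure exists. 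It is worth noting that finiteness of $A$ enters only through the finiteness of $Q_A$ and of the accessible sectors: what does the work is the locality of $A$'s actions to $\mathscr{B}_A$, not a dimension count.

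I expect the main obstacle to be making the twin-environment step uniform over \emph{all} finite, possibly adaptive protocols --- ruling out that some clever sequence of screen operations pins down $\mathcal{S}(\bar{A})$ after all. The way to close this is to observe that (a) no-cloning denies $A$ the many independent copies of $\rho_{\bar{A}}$ that tomography of its spectrum would demand; (b) each round is a preparation/measurement of the screen that generically disturbs $\rho_{\bar{A}}$ through $H_{A\bar{A}}\neq 0$; and (c) the between-rounds evolution of $\bar{A}$ is governed by $H_{\bar{A}}$, over which $A$ has neither knowledge nor control --- so the hidden ancilla (or altered $H_{\bar{A}}$) can always be kept invisible to $\mathscr{B}_A$. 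Finally I would flag that this $A/\bar{A}$ statement is the prototype for what follows: the same screening obstruction, applied to $\mathscr{B}_A$ and $\mathscr{B}_B$ in Diagram \eqref{locc-diag}, is what will block operational certification of separability.
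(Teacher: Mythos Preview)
Your proposal is sound but takes a different route from the paper's own sketch. The paper argues by a dimension count: separability of $A$ from $\bar{A}$ forces $\dim(\mathcal{H}_{\bar{A}}) \gg \dim(\mathscr{B}_A) = 2^N$, so the boundary state $\rho_{\mathscr{B}_A}$ underdetermines $\rho_{\bar{A}}$, and any attempt to probe the interior of $\bar{A}$ directly is nonlocal and hence forbidden. You instead give a constructive indistinguishability argument --- the twin-environment step --- and explicitly disclaim dimension counting as the operative mechanism. Your route is more operational and arguably tighter: it exhibits a concrete pair $(\bar{A},\bar{A}')$ defeating any fixed protocol, and it addresses adaptive protocols head-on via the Markov-blanket screening and the no-cloning/disturbance remarks. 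The paper's route is shorter and makes the obstruction quantitative (the ratio $\dim(\mathcal{H}_{\bar{A}})/2^N$), which dovetails with its holographic-screen framing and feeds directly into the corollary for Theorem~\ref{thm2}. Both arguments rest on the same underlying fact --- that $A$'s access is confined to $\mathscr{B}_A$ --- so they are complementary rather than in tension; your construction can be read as making explicit the ``many-to-one'' map the paper's dimension inequality only asserts.

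One small flag: you gloss $\mathcal{S}(\bar{A})$ as ``the entanglement entropy across $\mathscr{B}_A$'', but in the paper's division of labour that is Theorem~\ref{thm2}; Theorem~\ref{thm1} concerns the entanglement structure \emph{internal} to $\bar{A}$ (cf.\ the informal summary following the three proof sketches and the use made of Theorem~\ref{thm1} in the proof of Theorem~\ref{thm4}). Your twin-environment construction handles that reading just as well --- a decoupled, internally-entangled ancilla alters the internal entanglement of $\bar{A}$ without touching the screen dynamics --- so this is a labeling issue rather than a gap in the argument.
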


\begin{theorem}[\cite{fg:23}, Cor. 3.1] \label{thm2}
No finite system $A$ can measure the entanglement entropy $\mathcal{S}(\rho_{A \bar{A}})$ across the boundary $\mathscr{B}_A$ that separates it from its environment $\bar{A}$.
\end{theorem}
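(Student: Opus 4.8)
\emph{Proof plan.} The plan is to obtain the statement as a corollary of Theorem~\ref{thm1}. The first step is to pin down what ``measuring the entanglement entropy $\mathcal{S}(\rho_{A\bar{A}})$ across $\mathscr{B}_A$'' means: it is the determination of the von Neumann entropy attached to the bipartition $A \mid \bar{A}$ of the effectively-isolated universe $U = A\bar{A}$. Since $U$ is effectively isolated its state $\rho_U$ may be taken pure, and the Schmidt decomposition of $\rho_U$ across $\mathscr{B}_A$ then gives $\mathcal{S}(\rho_A) = \mathcal{S}(\rho_{\bar{A}})$; this common number is by definition the entanglement entropy across $\mathscr{B}_A$. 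So the claim reduces to showing that no finite $A$ can determine this common value.

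Second, I would observe that $\mathcal{S}(\rho_{\bar{A}})$ is exactly the quantity denoted $\mathcal{S}(\bar{A})$ in Theorem~\ref{thm1}. Hence any hypothetical protocol by which $A$ recovered $\mathcal{S}(\rho_{A\bar{A}})$ would, through the identity of the previous paragraph, also yield $\mathcal{S}(\bar{A})$, directly contradicting Theorem~\ref{thm1}. This disposes of the corollary along the ``environment side'' of the boundary.

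Third --- the step I expect to be the crux --- I would close the loophole that $A$ might instead read $\mathcal{S}(\rho_A)$ off its \emph{own} side of $\mathscr{B}_A$, with no reference to $\bar{A}$. The setup of Section~2 is what rules this out: $A$'s only admissible actions are finitely many completely positive maps $\mathcal{E}_j$ acting \emph{on} the boundary qubit space $q^N = \mathcal{H}_{\mathscr{B}_A}$, with outcomes written into a classical memory relative to a fixed QRF. Such an instrument acts at and on $\mathscr{B}_A$ but never beyond it, so it cannot implement tomography of the bulk state $\rho_A$; and since the separability constraint $\rho_{A\bar{A}} = \rho_A\rho_{\bar{A}}$ holds at all times, the screen carries, at any instant, no $A$--$\bar{A}$ correlation to be harvested. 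Reconstructing $\mathcal{S}(\rho_A)$ from instrument records alone would therefore demand an unbounded number of identically-prepared trials --- equivalently, an unbounded QRF --- contradicting the finiteness of $A$. With both the ``environment'' and ``own'' routes closed, and these exhausting the ways a finite $A$ could access a quantity defined on the cut $\mathscr{B}_A$, the statement follows. The main obstacle is making this exhaustiveness rigorous: one must argue that any boundary-local instrument sequence factors through data already covered by Theorem~\ref{thm1} together with bulk tomography, leaving no third channel of information about the entropy across $\mathscr{B}_A$.
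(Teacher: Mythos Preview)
Your route and the paper's differ in framing. The paper's sketch is two clauses only: (i) the boundary interaction $H_{A\bar{A}}$ is independent of tensor-product decompositions of $\bar{A}$, and (ii) $A$ cannot measure the total state $\rho_{\bar{A}}$ (the latter being carried over from the proof of Theorem~\ref{thm1}). It does not invoke purity of $U$ or a Schmidt identity, and it does not separately treat a self-tomography loophole; the entire argument is that whatever $A$ learns is filtered through $H_{A\bar{A}}$, which neither fixes how $\bar{A}$ is carved up nor is high-dimensional enough to recover $\rho_{\bar{A}}$. Your reduction via $\mathcal{S}(\rho_A)=\mathcal{S}(\rho_{\bar{A}})$ followed by an appeal to Theorem~\ref{thm1} is a legitimate alternative packaging of the same obstruction, and your explicit ``own-side'' step is a refinement the paper simply leaves implicit.

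One genuine tension in your Step~3: you import the separability stipulation $\rho_{A\bar{A}}=\rho_A\rho_{\bar{A}}$ from Section~2 to argue that the screen carries no $A$--$\bar{A}$ correlation to harvest. But if that product structure is \emph{assumed} as a hypothesis, the entanglement entropy across $\mathscr{B}_A$ is identically zero and the theorem collapses to a triviality; in the paper's logic this separability is precisely the pragmatic stipulation whose operational \emph{verifiability} Theorem~\ref{thm2} is meant to deny (cf.\ the informal gloss ``that it itself is separate from its environment''). Your obstruction to self-tomography should therefore rest not on the product form of $\rho_{A\bar{A}}$ but on the dimensional mismatch ${\rm dim}(\mathcal{H}_A)\gg 2^N$ between bulk and screen---exactly the mechanism the paper uses on the $\bar{A}$ side.
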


\begin{theorem}[\cite{fgmz:25b}, Thm. 2] \label{thm3}
An observer $C$ embedded in the shared environment $E$ of systems $A$ and $B$ cannot determine, either by monitoring classical communication between $A$ and $B$ or by performing local measurements within $E$, whether $A$ and $B$ are employing a LOCC protocol with quantum and classical channels traversing $E$.
\end{theorem}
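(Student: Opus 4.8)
The plan is to treat $C$ itself as a finite observer of the kind depicted in Diagram~\eqref{alice}: $C$ has a holographic screen $\mathscr{B}_C$ separating it from its environment $\bar C$, where $\bar C$ now contains $A$, $B$, and the remainder of $E$, and $C$'s bulk internal interaction $H_C$ implements a TQFT $Q_C$ acting on some sectors of $\mathscr{B}_C$. By the construction of Section~2, everything $C$ learns is a classical record written by the quantum--classical map associated with $Q_C$ from measurements performed at sectors of $\mathscr{B}_C$, and $C$ never acts beyond $\mathscr{B}_C$. The first step is to collapse the two stipulated modes of access to this single one: ``monitoring classical communication between $A$ and $B$'' requires that (a sector of) the classical channel of Diagram~\eqref{locc-diag} pass through $\mathscr{B}_C$, so that $C$ reads the transmitted symbols as boundary data, while ``local measurements within $E$'' are by definition operations of $Q_C$ at $\mathscr{B}_C$. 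It therefore suffices to show that no function of such boundary records certifies --- or refutes --- the LOCC structure.

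Second, I would dispose of the classical-transcript route. The symbols crossing the classical channel form a transcript $j_1, j_2, \dots$ of instrument outcomes, and by the structure of the associated quantum--classical map $\rho \mapsto \sum_j \mathcal{E}_j(\rho)\otimes\ket{j}\bra{j}$ the transcript, and its statistics over repeated runs, is a coarse-graining invariant under any change of the global dynamics that preserves the outcome distribution. In particular, for any candidate LOCC run there is a non-LOCC process reproducing exactly the same transcript statistics --- a single monolithic quantum system computing the same distribution, or an $A$--$B$ pair whose nominally classical channel secretly carries quantum coherence, or whose shared state is entangled beyond the designated quantum channel, in a way erased by the recorded symbols. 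The impossibility of separating such alternatives on the basis of their classical statistics is precisely the mechanism behind the {\bf QMA}$(k) = $ {\bf QMA} collapse under LOCC of \cite{bcy:11}. Hence the transcript, and any classical post-processing of it by $C$, is blind both to whether $H_{AB}=0$ and to whether the joint operator factorizes as $Q_A \otimes Q_B$.

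Third, I would rule out the local-measurement route by reduction to Theorems~\ref{thm1} and~\ref{thm2}. To certify that $A$ and $B$ run a LOCC protocol with channels traversing $E$, $C$ would have to confirm the separability $\rho_{ABE}=\rho_A\rho_B\rho_E$ together with the channel structure --- equivalently, that the entanglement carried between $q_A$ and $q_B$ is exactly the designated quantum channel and that nothing further links $A$, $B$, and the remainder of $E$. But the entanglement entropies encoding this information live across the boundaries $\mathscr{B}_A$, $\mathscr{B}_B$, and the boundary separating the $q_A q_B$ channel from the rest of $E$, none of which is $\mathscr{B}_C$, the only boundary at which $C$ acts. A direct measurement of $\mathcal{S}(\rho_{q_A q_B})$ would require $C$ to hold local access to both ends of the quantum channel, i.e. to insert itself into the channel in place of $A$ or $B$, contradicting $C$ being a distinct system embedded in $E$; and measuring $\mathcal{S}(\rho_{AB})$ relative to the rest of $E$, or $\mathcal{S}$ across $\mathscr{B}_A$, $\mathscr{B}_B$, or $\mathscr{B}_C$, is forbidden to the system owning the relevant boundary by Theorems~\ref{thm1}--\ref{thm2} and a fortiori unavailable to $C$. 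Combining the two cases, a hypothetical $C$ able to make the determination would yield either a classical transcript distinguishing global dynamics with identical outcome statistics, or a measurement by some finite system of an entanglement entropy it provably cannot measure --- a contradiction in either case, so no such $C$ exists.

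The step I expect to be the main obstacle is the claim, implicit above, that no accumulation of boundary data at $\mathscr{B}_C$ --- over arbitrarily many rounds, with arbitrary ancillas and adaptive choices of $Q_C$ --- ever pins down the $A$--$B$ entanglement structure. Making this airtight means exhibiting, for an arbitrary proposed certification procedure run by $C$, an explicit pair of global scenarios, one LOCC and one not, that induce identical boundary records at $\mathscr{B}_C$; the natural construction purifies the ``rest of $E$'' so as to hold fixed the reduced state seen through $\mathscr{B}_C$ while rearranging its correlations with everything outside $\mathscr{B}_C$, invoking Theorem~\ref{thm1} to guarantee that $C$ cannot detect the rearrangement. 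By comparison, the transcript argument of the second step is routine given \cite{bcy:11}, and the locality and TQFT bookkeeping of the first step is routine given Section~2.
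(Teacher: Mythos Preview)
Your proposal is correct and takes essentially the same approach as the paper's sketch: model $C$ as a finite observer with its own boundary $\mathscr{B}_C$ and reduce both access modes to Theorems~\ref{thm1} and~\ref{thm2}. The paper differs only in packaging --- for the classical-communication case it observes directly that since $A$ and $B$ themselves cannot measure $\mathcal{S}(\rho_{AB})$ (Theorem~\ref{thm2}) their communications cannot encode it, rather than your simulation/indistinguishability argument, and for the local-measurement case it invokes the decomposition-independence of $H_{C\bar{C}}$ to conclude that $C$ cannot locate $\mathscr{B}_A$ or $\mathscr{B}_B$ within $\bar{C}$, which is precisely what your purification construction in the final paragraph is reaching for.
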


The proofs can be sketched as follows:

\begin{proof} [Thm. 1 (sketch)]
Separability between $A$ and $\bar{A}$ requires that the interaction $H_{A \bar{A}}$ is weak compared to the bulk self-interactions $H_A$ and $H_{\bar{A}}$.  Equivalently, separability requires that the bulk Hilbert-space dimensions ${\rm dim}(\mathcal{H}_{A}), {\rm dim}(\mathcal{H}_{\bar{A}}) \gg {\rm dim}(H_{A \bar{A}}) = {\rm dim}(\mathscr{B}_A) = 2^N $.  The state $\rho_{\bar{A}}$ cannot, therefore, be determined by measuring $\rho_{\mathscr{B}_A}$ --- indeed, $A$ also cannot measure the full states $\rho_{\mathscr{B}_A}$ or $\rho_A$ for reasons detailed in \cite{fg:25}.  Any measurement of an internal state of $\bar{A}$ is nonlocal for $A$, and hence forbidden by LOCC. 
\end{proof}

\begin{proof} [Thm. 2 (sketch)]
The interaction $H_{A \bar{A}}$ is independent of tensor-product decompositions of $\bar{A}$, and as noted above, $A$ cannot measure the total state $\rho_{\bar{A}}$.
\end{proof}

\begin{proof} [Thm. 3 (sketch)]

The observer $C$, Charlie, is situated as in:

\begin{equation} \label{locc-C-diag}
\begin{gathered}
\scalebox{0.8}{
\begin{tikzpicture}[every tqft/.append style={transform shape}]
\draw[rotate=90] (0,0) ellipse (2.8cm and 1 cm);
\node[above] at (0,1.9) {$\mathscr{B}$};
\draw [thick] (-0.2,1.9) arc [radius=1, start angle=90, end angle= 270];
\draw [thick] (-0.2,1.3) arc [radius=0.4, start angle=90, end angle= 270];
\draw[rotate=90,fill=green,fill opacity=1] (1.6,0.2) ellipse (0.3 cm and 0.2 cm);
\draw[rotate=90,fill=green,fill opacity=1] (0.2,0.2) ellipse (0.3 cm and 0.2 cm);
\draw [thick] (-0.2,-0.3) arc [radius=1, start angle=90, end angle= 270];
\draw [thick] (-0.2,-0.9) arc [radius=0.4, start angle=90, end angle= 270];
\draw[rotate=90,fill=green,fill opacity=1] (-0.6,0.2) ellipse (0.3 cm and 0.2 cm);
\draw[rotate=90,fill=green,fill opacity=1] (-2.0,0.2) ellipse (0.3 cm and 0.2 cm);
\draw [rotate=180, thick, dashed] (-0.2,0.9) arc [radius=0.7, start angle=90, end angle= 270];
\draw [rotate=180, thick, dashed] (-0.2,0.3) arc [radius=0.1, start angle=90, end angle= 270];
\draw [thick] (-0.2,0.5) -- (0,0.5);
\draw [thick] (-0.2,-0.1) -- (0,-0.1);
\draw [thick] (-0.2,-0.9) -- (0,-0.9);
\draw [thick] (-0.2,-0.3) -- (0,-0.3);
\draw [thick, dashed] (0,0.5) -- (0.2,0.5);
\draw [thick, dashed] (0,-0.1) -- (0.2,-0.1);
\draw [thick, dashed] (0,-0.9) -- (0.2,-0.9);
\draw [thick, dashed] (0,-0.3) -- (0.2,-0.3);
\node[above] at (-3,1.7) {Alice};
\node[above] at (-3,-1.7) {Bob};
\node[above] at (2.8,1.7) {$E$};
\draw [ultra thick, white] (-0.9,1.5) -- (-0.7,1.5);
\draw [ultra thick, white] (-1,1.3) -- (-0.8,1.3);
\draw [ultra thick, white] (-1,1.1) -- (-0.8,1.1);
\draw [ultra thick, white] (-1,0.9) -- (-0.8,0.9);
\draw [ultra thick, white] (-1.1,0.7) -- (-0.8,0.7);
\draw [ultra thick, white] (-1.1,0.5) -- (-0.8,0.5);
\draw [ultra thick, white] (-1,-0.9) -- (-0.8,-0.9);
\draw [ultra thick, white] (-1,-1.1) -- (-0.8,-1.1);
\draw [ultra thick, white] (-1,-1.3) -- (-0.8,-1.3);
\draw [ultra thick, white] (-0.9,-1.5) -- (-0.7,-1.5);
\draw [ultra thick, white] (-0.9,-1.7) -- (-0.7,-1.7);
\draw [ultra thick, white] (-0.8,-1.9) -- (-0.6,-1.9);
\draw [ultra thick, white] (-0.8,-2.1) -- (-0.6,-2.1);
\node[above] at (-1.3,1.4) {$Q_A$};
\node[above] at (-1.3,-2.4) {$Q_B$};
\draw [rotate=180, thick] (-0.2,2.3) arc [radius=2.1, start angle=90, end angle= 270];
\draw [rotate=180, thick] (-0.2,1.7) arc [radius=1.5, start angle=90, end angle= 270];
\draw [thick] (-0.2,1.9) -- (0,1.9);
\draw [thick] (-0.2,1.3) -- (0,1.3);
\draw [thick, dashed] (0.2,1.9) -- (0,1.9);
\draw [thick, dashed] (0.2,1.3) -- (0,1.3);
\draw [thick] (-0.2,-1.7) -- (0,-1.7);
\draw [thick] (-0.2,-2.3) -- (0,-2.3);
\draw [thick, dashed] (0.2,-1.7) -- (0,-1.7);
\draw [thick, dashed] (0.2,-2.3) -- (0,-2.3);
\draw [ultra thick, white] (0.3,2) -- (0.3,1.2);
\draw [ultra thick, white] (0.5,2) -- (0.5,1.2);
\draw [ultra thick, white] (0.7,1.9) -- (0.7,1.1);
\draw [ultra thick, white] (0.3,-2.4) -- (0.3,-1.5);
\draw [ultra thick, white] (0.5,-2.4) -- (0.5,-1.5);
\draw [ultra thick, white] (0.7,-1.8) -- (0.7,-1.5);
\node[above] at (4.5,-2.4) {Quantum channel};
\draw [thick, ->] (2.9,-2) -- (0.7,-0.8);
\node[above] at (4.5,-1.7) {Classical channel};
\draw [thick, ->] (2.9,-1.3) -- (2.15,-0.9);
\draw [thick] (2.9, -0.8) -- (2.9, 0.8) -- (1.3, 0.0) -- (2.9, -0.8);
\draw[rotate=30,fill=green,fill opacity=1] (1.8, -0.7) ellipse (0.3 cm and 0.2 cm);
\draw[rotate=-30,fill=green,fill opacity=1] (1.9, 0.7) ellipse (0.3 cm and 0.2 cm);
\node at (3.7, 0.0) {Charlie};
\end{tikzpicture}
}
\end{gathered}
\end{equation}

Clearly, $C$ can determine whether the observations communicated by $A$ and $B$ violate the CHSH inequality; this is exactly the role of the referee in a CHSH game, or of a self-tester in an entanglement-certification protocol \cite{friis:19}.  However, classical communications between $A$ and $B$ cannot determine $\mathcal{S}(\rho_{AB})$, by Thm. \ref{thm2} above.  Theorem \ref{thm1} above rules out $C$ measuring $\mathcal{S}(\bar{C})$, where $C\bar{C} = U$, and the decomposition-independence of $\mathcal{H}_{C \bar{C}}$ forbids any local measurement by $C$ determining the location of boundaries, including $\mathscr{B}_A$ and $\mathscr{B}_B$, within $\bar{C}$.
\end{proof}

Informally, operational locality prevents any system from demonstrating that its environment comprises mutually-separable systems (Thm. \ref{thm1}), that it itself is separate from its environment (Thm. \ref{thm2}), or that classical communications that it is monitoring are reporting operationally-local outcomes (Thm. \ref{thm3}).  A specific consequence of Thm. \ref{thm3} is that purported MIP* machines cannot be shown by topologically-local operations to be MIP*; in particular, the purported independence of the provers cannot be operationally certified \cite{fgmz:25b}.  This result generalizes, as shown in the next section.

\section{Separability cannot be operationally certified}

Theorem \ref{thm2} answers the second question raised in the Introduction: if $A$ cannot show that it is separate from $\bar{A}$, it cannot show that it is engaged in a LOCC protocol with a distinct system $B$ embedded in $\bar{A}$.  It cannot, in particular, show by local operations that $H_{AB} = 0$ as required by LOCC.  From the perspective of its participants, therefore, a protocol being LOCC is an {\em a priori} stipulation, not an observational outcome.  Theorem \ref{thm3} confirms this result from the perspective of any third party.

Without loss of generality, therefore, let us assume that a particular protocol is LOCC, and address the first question raised in the Introduction: can $A$ and $B$ certify, via the protocol, that the state $\rho_{y_A y_B} = \rho_{y_A} \rho_{y_B}$, i.e. that the state they share via their quantum channel is {\em not} entangled?  We will show in this Section that they cannot.

Compliance with the CHSH inequality suggests separability, but the existence of entangled pairs that respect the CHSH inequality, e.g. Werner states in appropriate parameter ranges \cite{werner}, shows that CHSH compliance cannot be criterial.  Without a shared, effectively classical clock, i.e. a clock separable from both $A$ and $B$, $A$ and $B$ cannot use causality as evidence of separability, and assuming a shared clock begs the question, as $A$ and $B$ must regard the signals they each receive from the clock as independent, i.e. separable, to use the clock as a time standard.  Their classical communication channel, moreover, only allows $A$ and $B$ to exchange observational outcomes. Neither has observational access to the entire joint state $\rho_{y_A y_B}$, so neither observer can determine by direct observation that it is separable.  

As an example, consider quantum state tomography.  The full reconstruction of the joint state $\rho_{y_Ay_B}$ can be in principle obtained via quantum state tomography, which can in certain practical cases be performed with high accuracy via local operations --- see e.g. \cite{xin2019local}. If $A$ wants to reconstruct $\rho_{y_Ay_B}$ using local quantum state tomography, $A$ needs to access all expectation values of operators of the form $T^A_i\otimes T^B_j$, i.e. $A$ needs to compute $\langle T^A_i\otimes T^B_j\rangle = {\rm Tr}((T^A_i\otimes T^B_j)\rho_{y_Ay_B})$, where $T^A_i\otimes T^B_j$ is a basis for the bounded linear operators on $\mathcal H_A$ and $\mathcal H_B$ respectively. While our attention is limited to the finite dimensional case, for infinite dimensional separable Hilbert spaces this can also be seen analogously by considering the strong closure of algebraic tensor product as above. 

However, in order to perform the measurements of all $\langle T^A_i\otimes T^B_j\rangle$, non-classical communication between $A$ and $B$ is needed, as the trace cannot be computed locally by $A$ and $B$ separately, and communicated between the two. In fact, what $A$ and $B$ can access locally are the marginal distributions, i.e. the partial traces $\rho_{y_A} = {\rm Tr}_B(\rho_{y_Ay_B})$ and $\rho_{y_B} = {\rm Tr}_A(\rho_{y_Ay_B})$, that are known not to be sufficient to reconstruct $\rho_{y_Ay_B}$ in general. Observe that separability is a necessary condition for being able to reconstruct $\rho_{y_Ay_B}$ from $\rho_{y_A}$ and $\rho_{y_B}$, although not sufficient. It is often assumed that $A$ and $B$ know the joint state $\rho_{y_A y_B}$, e.g. in defining the Peres-Horodecki criterion \cite{peres:96, horodecki:96}, presumably via classical communication with some third party who has access to the state and has measured it.  The assumption of classical communication clearly begs the question in the current context. The fundamental problem that $A$ and $B$ face, however, is circularity: any LOCC strategy for demonstrating that $\rho_{y_A y_B}$ is separable depends on the assumption of classical communication, i.e. on the assumption that the joint state of $A$'s and $B$'s ``ends'' of the classical channel, $\rho_{m_A m_B}$, is separable. 

The circularity of the situation becomes obvious when the question is rephrased: can $A$ and $B$ certify, via a LOCC protocol, that the state $\rho_{m_A m_B} = \rho_{m_A} \rho_{m_B}$, i.e. that their purportedly classical channel is indeed classical?  The states $\rho_{m_A}$ and $\rho_{m_B}$ are states of sectors of $\mathscr{B}_A$ and $\mathscr{B}_B$, respectively; they are physical states that can, without loss of generality, be represented by joint states of finitely many qubits \cite{fg:25}.  As emphasized by Tipler \cite{tipler:14} and others, measuring the state of a physical carrier of information is measuring a quantum state.  Measurements of the quantum-channel states $\rho_{y_A}$ and $\rho_{y_B}$ provide no information one way or the other about the separability of $\rho_{m_A m_B}$, which depends only on whether the {\em classical} channel traversing $E$ is fully decohered by its interaction with the rest of $E$.\footnote{The {\em mechanism} of decoherence, which as shown in \cite{garcia-perez:20} may not involve entanglement between channel degrees of freedom and the rest of $E$, is observationally inaccessible to both $A$ and $B$ and hence is irrelevant to what $A$ or $B$ can infer from observations of $\rho_{y_A}$ and $\rho_{y_B}$.}  By Thm. \ref{thm1} above, neither $A$ nor $B$ can determine by local operations whether this is the case.

We can summarize these results as:

\begin{theorem} \label{thm4}
Operational protocols cannot certify separability of shared states.
\end{theorem}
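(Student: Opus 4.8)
The plan is to reduce the impossibility of certifying separability to the impossibility results already in hand --- chiefly Theorems~\ref{thm1} and~\ref{thm2} --- by exhibiting the circularity built into any candidate LOCC certification strategy. First I would fix, without loss of generality (this is licensed by Theorems~\ref{thm2} and~\ref{thm3}), a protocol \emph{stipulated} to be LOCC, so that $A$ and $B$ are mutually separable systems interacting only through $E$ via a designated quantum channel with ends $y_A, y_B$ and a designated classical channel with ends $m_A, m_B$. What must be shown is that no finite sequence of local operations $Q_A$ on $(m_A, y_A)$ and $Q_B$ on $(m_B, y_B)$, interleaved with classical exchanges, can output a certificate that $\rho_{y_A y_B} = \rho_{y_A}\rho_{y_B}$.

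Next I would dispatch the natural candidate strategies one at a time. For CHSH-type tests, the existence of entangled Werner states~\cite{werner} that respect the CHSH inequality shows that $EXP \leq 2$ is consistent with entanglement, so no CHSH statistic is criterial for separability. For full tomographic reconstruction, recovering $\rho_{y_A y_B}$ requires all expectation values $\langle T^A_i \otimes T^B_j \rangle = \mathrm{Tr}\big((T^A_i \otimes T^B_j)\,\rho_{y_A y_B}\big)$; since the trace of a product operator on the joint space is not a function of the two local traces, these numbers cannot be assembled from data that each party produces locally and relays classically --- computing them is itself a nonlocal operation, hence forbidden. The only states accessible to $A$ and $B$ individually are the marginals $\rho_{y_A} = \mathrm{Tr}_B(\rho_{y_A y_B})$ and $\rho_{y_B} = \mathrm{Tr}_A(\rho_{y_A y_B})$, which do not determine $\rho_{y_A y_B}$ (separability being necessary but not sufficient for reconstructibility from marginals). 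And criteria such as Peres--Horodecki~\cite{peres:96, horodecki:96} presuppose that $A$ and $B$ already possess $\rho_{y_A y_B}$, which could only have been supplied by classical communication with a third party --- which begs the question.

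The heart of the proof, and the step I expect to be the main obstacle to formulate cleanly, is the circularity reduction. Every LOCC strategy is, by definition, one that uses classical communication, hence one that presupposes that the joint state $\rho_{m_A m_B}$ of the classical channel's ends is itself separable, $\rho_{m_A m_B} = \rho_{m_A}\rho_{m_B}$. I would therefore replace the target question by: can a LOCC protocol certify $\rho_{m_A m_B} = \rho_{m_A}\rho_{m_B}$? Since $m_A$ and $m_B$ are sectors of the holographic boundaries $\mathscr{B}_A, \mathscr{B}_B$ and measuring the state of a physical information carrier is measuring a quantum state, this is equivalent to certifying that the classical channel traversing $E$ is fully decohered by the rest of $E$ --- a statement about the entanglement entropy between the channel degrees of freedom and $\bar{A}$ (equivalently $\bar{B}$). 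By Theorem~\ref{thm1} no finite system can measure the entanglement entropy of its environment, and by Theorem~\ref{thm2} none can measure the entropy across its own boundary; so neither $A$ nor $B$ can establish that $\rho_{m_A m_B}$ is separable. Since observations of $\rho_{y_A}$ and $\rho_{y_B}$ carry no information about the decoherence status of the classical channel, the putative certification is both circular and, at its irreducible core, an instance of the measurement forbidden by Theorem~\ref{thm1}.

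The one genuinely delicate point is to make the clause ``LOCC presupposes separability of $\rho_{m_A m_B}$'' precise without importing an interpretation-dependent meaning of ``classical''. I would handle this by working entirely within the holographic-screen representation of~\cite{fg:25}, so that $\rho_{m_A m_B}$ is manifestly an ordinary state of finitely many qubits and ``classical channel'' means exactly ``sector decohered by $E$'', keeping the argument topological and interpretation-independent. Assembling the four dismissals of candidate strategies with this reduction then yields the theorem.
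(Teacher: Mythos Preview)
Your proposal is correct and rests on the same key reduction as the paper: invoking Theorem~\ref{thm1} to conclude that $A$ cannot certify separability of any joint state involving a boundary sector and a component of $\bar{A}$. The difference is structural rather than substantive --- the paper places the dismissals of CHSH, tomography, Peres--Horodecki, and the circularity argument in the discussion \emph{preceding} the theorem and then gives a three-sentence proof that applies Theorem~\ref{thm1} directly, whereas you fold all of that material into the proof itself.
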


\begin{proof}
From Thm. \ref{thm1} above, a participant $A$ in an operational protocol cannot determine the entanglement entropy $\mathcal{S}(\bar{A})$ of its environment $\bar{A}$.  Such an $A$ cannot, therefore, certify that the joint state $\rho_q \rho_{\xi}$ of any sector $q$ of its boundary and any component $\xi$ of $\bar{A}$ is separable.  Such an $A$ cannot, therefore, certify that it is engaged in an operational protocol with another observer, or with any other classical resource.  
\end{proof}

As this result depend only on topological locality, it is unchanged by geometric considerations, e.g. by embedding Diagram \eqref{causal-locc} in a spatial background that enforces a ``large'' distance between $A$ and $B$.  Einstein's idea that ``at a specific time, [these] things claim an existence independent of one another, insofar as [these] things `lie in different parts of space''' \cite{einstein:48}, i.e. that spatial separation implies separability, underlay his discomfort with entanglement as a physical phenomenon.  Spatial separation, however, must enforce separability either in all cases or none, as it cannot be shown by local operations to enforce separability in particular cases.  As dozens of experiments demonstrating, relative to an assumption of separability of the observers, entanglement between spacelike-separated systems from \cite{aspect:82} onward indicate, spatial separation alone does not guarantee joint-state separability.  Adding a physical ``collapse'' dynamics to $E$ is similarly ineffective, as whether any particular channel state has collapsed cannot be determined by local operations.

\section{Pragmatic classicality}

Grinbaum \cite{grinbaum:17} suggested that device-independent protocols, e.g. LOCC, render the notion of ``physical systems'' superfluous, but did not extend this conclusion to the participants in or users of such protocols, or to the joint environment with which they interact, all of which are necessary for the definition of LOCC.  The present analysis shows that protocols that require topologically-local operations --- including all device-independent protocols --- cannot certify either the separability of agents or the classicality of communication on which their definitions depend.  Classicality, whether in the form of separability, decoherence, classical communication, classical memory, or conditional independence, can only, therefore, be regarded as a resource for such protocols by {\em a priori} stipulation.  As quantum theory does not require classicality in any of these forms, such stipulations are pragmatic, not principled.  It is plausible to view Bohr as arguing for this very point in his discussion of the pragmatic need for ``classical concepts'' when conceptualizing or communicating about experiments \cite{bohr:58}.  

One purpose of LOCC protocols is to provide a mechanism for intersubjective agreement among observers who share access to a quantum state; quantum Darwinism has such agreement as an explicit objective \cite{zurek:09}.  Note, however, that LOCC supports intersubjective agreement only via classical communication of measurement outcomes, as in e.g. \cite{rovelli-friends:24}; direct measurements of $A$ by $B$, as advocated e.g. in \cite{adlam-rovelli:23}, are forbidden by the requirement that $H_{AB} = 0$.  We can see from the above that intersubjective agreement can only be a pragmatic notion, a result obtained already by Quine via an analysis of language use \cite{quine:60}.

From a practical perspective, theoretical conclusions in quantum computational complexity or other areas that depend on assumptions of classicality as a resource, e.g. on assumptions of independence between provers or other systems as in \cite{ji:21}, cannot be regarded as operationally verifiable \cite{fgmz:25b}.  Even probabilistic measures are, to the extent that they rely on LOCC or any other assumptions of separability of or classical communication between observers, effectively circular.

We have, in the above, employed quantum theory without further assumptions to formalize LOCC, and hence to formalize the processes of observation and inter-observer communication, in a purely-topological, interpretation-independent way.  The topological locality of measurements is a straightforward consequence of defining an observer's interaction with its environment, e.g. $H_{A \bar{A}}$, at the decompositional boundary separating states of the observer from states of its environment, i.e. the boundary implied by the factorization $U = A \bar{A}$ when $A$ and $\bar{A}$ are required, by stipulation, to be distinct systems.  We have shown, in this minimal and interpretation-independent setting, that classicality, including the classical memory of observers that allows them to accumulate observational outcomes and compute statistical tests such as CHSH, is purely pragmatic.  The question of how physics can ``appear classical'' is, in this setting, effectively the question of how some states can function, in some contexts, as classical memories.\footnote{A shared language provides classicality, in the form of jointly recognized and named objects, as a pragmatic resource to multiple communicating agents by enabling joint minimization of a variational free energy measuring medium-term Bayesian prediction errors \cite{friston-nbr:24}.}  


\section{Conclusion}

Our point in this paper is essentially G\"{o}delian: that we apply our best theories also to ourselves, and not assume, despite our classical intuitions, that we have powers of observation, inference, or action that our best theories tell us no finite system can have.  When we do this, we find that classicality --- and with it, all aspects of ourselves and our laboratories that we comfortably treat as classical --- can only be regarded as a pragmatic stipulation.  In particular, the ubiquitous assumptions of separability and consequently, full knowledge of shared joint states, that are made in definitions of operational protocols can only be regarded as stipulations.  Such stipulations of classicality are, as Bohr emphasized, essential in practice, but this pragmatic role does not support classicality being a ``real'' physical resource. 

In this G\"{o}delian stance we are in consonance with Hawking, who in \cite{hawking-incompleteness} insisted that self-referentiality and (self) inconsistent assumptions, for instance, are elements pervading theoretical physics that account for its in-principle incompleteness.  In his magisterial review of the interpretative landscape, Landsman \cite{landsman:06b} characterized as an ``extreme position'' that ``the classical world has only `relative' or `perspectival' existence'' (p. 419).  Our results suggest that we consider ourselves, and our experimental manipulations and outcomes, from this very position.

\bibliographystyle{unsrt} 
\bibliography{references}
\end{document}